\title{On the Introduction of Guarded Lists in Bach:\\ Expressiveness, Correctness, and Efficiency Issues}
\author{Manel Barkallah 
\institute{Nadi Research Institute\\ Faculty of Computer Science\\ University of Namur \\ Namur, Belgium}
\email{manel.barkallah@unamur.be}
\and
Jean-Marie Jacquet 
\institute{Nadi Research Institute\\ Faculty of Computer Science\\ University of Namur \\ Namur, Belgium}
\email{\quad jean-marie.jacquet@unamur.be}
}
\def\titlerunning{On the Introduction of Guarded Lists in Bach: Expressiveness, Correctness, and Efficiency Issues}
\def\authorrunning{M. Barkallah \& J-M. Jacquet}
\begin{document}
\maketitle

\begin{abstract}
Concurrency theory has received considerable attention, but mostly in
the scope of synchronous process algebras such as CCS, CSP, and
ACP. As another way of handling concurrency, data-based coordination
languages aim to provide a clear separation between interaction and
computation by synchronizing processes asynchronously by means of 
information being available or not on a shared space. Although these languages
enjoy interesting properties, verifying program correctness remains
challenging. Some works, such as Anemone, have introduced facilities,
including animations and model checking of temporal logic formulae, to
better grasp system modelling. However, model checking is known to
raise performance issues due to the state space explosion problem. In
this paper, we propose a guarded list construct as a solution to
address this problem. We establish that the guarded list construct
increases performance while strictly enriching the expressiveness of
data-based coordination languages. Furthermore, we introduce a notion
of refinement to introduce the guarded list construct in a
correctness-preserving manner.
\end{abstract}

% -------------------------------------------------------------- %

% Introduction
\section{Introduction}

Concurrency theory has been the attention of a considerable effort
these last decades. However most of the effort has been devoted to
algebra based on synchronous communication, such as
CCS~\cite{Milner-CCS}, CSP~\cite{Hoare-CSP} and
ACP~\cite{Baeten-ACP}. Another path of research has been initiated by
Gelernter and Carriero, who advocated in \cite{GC92} that a clear
separation between the interactional and the computational aspects of
software components has to take place in order to build interactive
distributed systems. Their claim has been supported by the design of a
model, Linda \cite{CG89}, originally presented as a set of inter-agent
communication primitives which may be added to almost any programming
language. Besides process creation, this set includes primitives for
adding, deleting, and testing the presence/absence of data in a shared
dataspace. In doing so they proposed a new form of synchronization of
processes, occurring asynchronously, through the availability or
absence of pieces of information on a shared space.

A number of other models, now referred to as coordination models, have
been proposed afterwards. However, although many pieces of work have
been devoted to the proposal of new languages, semantics and
implementations, few articles have addressed the concerns of
practically constructing programs in coordination languages, in
particular in checking that what is described by programs actually
corresponds to what has to be modelled.

Based on previous results
\cite{BJ98,BJ03b,DJL13,DJLScp13,DJL14,DJL18,%
  JDB00,JL09,LJ04,LJ07,LJB04}, we have introduced in \cite{Scan} a
workbench \Scan\ to reason on programs written in \Bach, a Linda-like
dialect developed by the authors. It has been refined in
\cite{Anemone} to cope with relations, processes and multiple scenes.
The resulting workbench is named \Anemone. In both cases, one of our
goals was to allow the user to check properties by model checking
temporal logic formulae and by producing traces that can be replayed
as evidences of the establishment of the formulae. However, as
well-known in model checking, this goal raises performance issues
related to the state space explosion. In particular, letting
animation-related primitives interleave in many ways duplicates
research paths during model checking, with considerable performance
problems to check that formulae are established. To address this
problem, we introduce in this paper a guarded list construct and
establish that it yields an increase in performance while strictly
enriching the expressiveness of \Bach.

The rest of the paper is organized as follows. Section~\ref{anim-bach}
presents the reference Linda-like language \Bach\ employed by
\Scan\ and \Anemone. Section~\ref{guarded-list} introduces the guarded list
construction as well as the refinement relation. It is proved to
increase the expressiveness of the \Bach\ language in
Section~\ref{expressiveness} while the gain of efficiency in
model-checking is established in Section~\ref{performance}. Finally,
Section~\ref{relatedWork} compares our work with related work and
Section~\ref{conclusion} sums up the paper and sketches future work.

It is worth observing that, as duly compared in
Section~\ref{relatedWork}, introducing an atomic construct is not
new. However, our contribution is (i) to introduce a construct
tailored to coordination languages, (ii) to establish that it yields a
gain of performance in model checking and also an increase of
expressiveness, and finally (iii) to identify refinement-based
criteria so as to guide the programmer to introduce the guarded list
construct in a correctness-preserving manner.

\begin{figure}[t]
  % \begin{wrapfigure}[15]{l}{5cm}
  \begin{center}
  \includegraphics[width=0.4\textwidth]{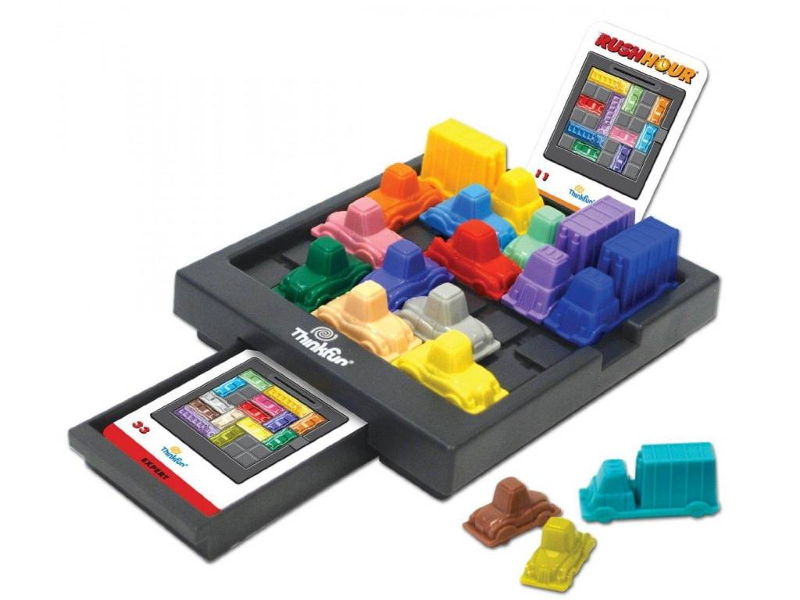}
  \hspace*{2cm}
  \begin{minipage}[t]{4cm}
\begin{tikzpicture}

\draw[step=0.5cm, gray] (0,0) grid (3,3);

\foreach \a in {1,2,...,6}{
\draw (-0.25,3.25-\a*0.50) node{$\a$};
\draw (\a*0.50 - 0.25,3.25) node{$\a$};}

\draw[fill=yellow] (0.1,2.9) rectangle (0.4,1.6); % yellow truck
\draw[fill=white!30!purple] (1.6,2.9) rectangle (1.9,1.6);   % purple truck
\draw[fill=white!35!blue] (1.6,1.4) rectangle (2.9,1.1); % blue truck
\draw[fill=black!35!green] (1.1,0.4) rectangle (2.4,0.1); % green truck

\draw[fill=red] (0.6,1.9) rectangle (1.4,1.6); % red car
\draw[fill=white!25!green] (1.1,1.4) rectangle (1.4,0.6); % green car
\draw[fill=white!25!orange] (2.6,0.9) rectangle (2.9,0.1); % orange car

\end{tikzpicture}

  \end{minipage}
  \end{center}
  \caption{Rush Hour Problem. On the left part, the game as
    illustrated at https://www.michaelfogleman.com/rush. On the right
    part, the game modeled as a grid of $6 \times 6$, with cars and
    trucks depicted as rectangles of different colors.
    \label{rush-hour}}
% \end{wrapfigure}
\end{figure}

To make the article more concrete, we shall use the running example
of~\cite{Anemone}, namely a solution to the rush hour puzzle. This
game, illustrated in Figure~\ref{rush-hour}, consists in moving cars
and trucks on a $6 \times 6$ grid, according to their direction, such
that the red car can exit. It can be formulated as a coordination
problem by considering cars and trucks as autonomous agents which have
to coordinate on the basis of free places.

% Anim Bach and temporal logic
\section{The \AnimBach\ language}
\label{anim-bach}

\subsection{Definition of data}

Following Linda, the \Bach\ language \cite{DJL18,JL07} uses four
primitives for manipulating pieces of information: $tell$ to put a
piece of information on a shared space, $ask$ to check its presence,
$nask$ to check its absence and $get$ to check its presence and remove
one occurrence. In its simplest version, named \BachT, pieces of
information consist of atomic tokens and the shared space, called the
store, amounts to a multiset of tokens. Although in principle such a
framework is sufficient to code many applications, it is however too
elementary in practice to code them easily. To that end, we introduce
more structured pieces of information which may employ sets defined as
in

\begin{lstlisting}
eset RCInt = { 1, 2, 3, 4, 5, 6}.
\end{lstlisting}

\noindent
in which the set $RCInt$ is defined as the set containing the elements
$1$ to $6$. In addition to sets, maps can be defined between them as functions
that take zero or more arguments. In practice, mapping
equations are used as rewriting rules, from left to right in the aim of
progressively reducing a complex map expression into a set element.

As an example of a map, assuming a grid of 6 by 6 featuring the rush
hour problem as in \cite{Anemone} and assuming that trucks in this game
take three cells and are identified by the upper and left-most cell
they occupy, the operation \texttt{down\_truck} determines the cell to
be taken by a truck moving down:

\begin{lstlisting}
map down_truck : RCInt -> RCInt.
eqn down_truck(1) = 4. down_truck(2) = 5. down_truck(3) = 6. 
\end{lstlisting}

\noindent
Note from this example that mappings may be partially defined, with the
responsibility put on the programmer to use them only when defined.

Structured pieces of information to be placed on the store
consist of flat tokens as well as
expressions of the form $f(a_1, \cdots, a_n)$ where $f$ is a functor
and $a_1$, \ldots, $a_n$ are set elements or structured pieces of information. 
As an example, in the rush
hour example, it is convenient to represent the free places of the
game as pieces of information of the form \texttt{free(i,j)} with $i$
a row and $j$ a column.

The set of structured pieces of information is subsequently denoted by
${\cal I}$. For short, si-term is used later to denote a structured
piece of information. Mapping definitions induce a rewriting relation
that we shall subsequently denote by $\leadsto$, that rewrites
si-terms to final si-terms, namely si-terms that cannot be reduced further.

\subsection{Primitives}

The primitives consist of the \texttt{tell}, \texttt{ask},
\texttt{nask} and \texttt{get} primitives already introduced, which
take as arguments elements of ${\cal I}$. A series of graphical
primitives are added to them. They aim at animating the
executions. They include \texttt{draw}, \texttt{move\_to},
\texttt{place\_at}, \texttt{hide}, \texttt{show} primitives, to cite
only a few. The key point for this paper is that they always succeed
and do not interfere with the shared space. For the rest of the paper,
we shall assume a set $GPrim$ of graphical primitives and will take
primitives from it.  The coordinated \Bach\ language enriched by
graphical primitives is subsequently referred to as \AnimBach.

\begin{figure}[t]
\begin{center}
%\fbox{%
% \begin{minipage}{11cm}
\begin{minipage}{15.5cm} 
% \begin{scriptsize}
% \begin{small}  
%
\[ \begin{array}{ccc}
    \begin{array}{r@{\hspace*{0.25cm}}c}
     {\bf(T)}
& 
    $$\dfrac{t \leadsto u}{\langle \; tell(t) \; \vert \; \sigma \;\rangle \longrightarrow \langle \; E \;\vert \; \sigma \cup \lbrace u\rbrace \;\rangle}$$\\ 
\\ \\ 
 
     {\bf(A)}        
& 
    $$\dfrac{t \leadsto u}{\langle \; ask(t) \;\vert\; \sigma \cup \lbrace u \rbrace \;\rangle \longrightarrow \langle\; E \;\vert\; \sigma \cup \lbrace u\rbrace \;\rangle}$$\\ 
\\ \\ 
     {\bf(G)}        
& 
      $$\dfrac{t \leadsto u}{\langle\; get(t)\;\vert\; \sigma \cup \lbrace u \rbrace \;\rangle \longrightarrow \langle\; E \;\vert\; \sigma \;\rangle}$$\\
    \end{array} & \hspace*{1cm} & 
    \begin{array}{r@{\hspace*{0.25cm}}c}
% \\ \\ 
%% 
      {\bf(N)}        
& 
      $$\dfrac{t \leadsto u, u \not\in \sigma}{\langle\; nask(t)\;\vert\; \sigma \;\rangle \longrightarrow \langle\; E \;\vert\; \sigma \;\rangle}$$\\
\\ \\
     {\bf(Gr)}
& 
    $$\dfrac{p \in GPrim}{\langle \; p \; \vert \; \sigma \;\rangle \longrightarrow \langle \; E \;\vert \; \sigma  \;\rangle}$$\\ 
    \end{array}
\end{array}
\] 
% \end{scriptsize}
% \end{small}
\end{minipage}
%}
\end{center}

\caption{Transition rules for the primitives}
\label{jmj-fig-primitives}
\end{figure}

The execution of primitives is formalized by the transition steps of
Figure~\ref{jmj-fig-primitives}.  Configurations are taken there as
pairs of instructions, for the moment reduced to simple primitives,
coupled to the contents of the shared space. 
% Following the constraint-like setting in which we have rephrased Linda primitives,
% the shared space is renamed as \textit{store} and is formally defined
% as a multiset of si-terms. 
Following the constraint-like setting of \Bach\ in which the Linda primitives have been rephrased,
the shared space is renamed as \textit{store} and is formally defined
as a multiset of si-terms. 
As a result, rule (T) states that the
execution of the $tell(t)$ primitive amounts to enriching the store by
an occurrence of $t$. The $E$ symbol is used in this rule as well as in
other rules to denote a terminated computation. Similarly, rules (A)
and (G) respectively state that the $ask(t)$ and $get(t)$ primitives
check whether $t$ is present on the store with the latter removing one
occurrence. Dually, as expressed in rule (N), the primitive $nask(t)$
tests whether $t$ is absent from the store. Finally, rule (Gr)
expresses that any graphical primitive succeeds without modifying the
store.

\subsection{Agents}

Primitives can be composed to form more complex agents by using
traditional composition operators from concurrency theory: sequential
composition, parallel composition and non-deterministic choice. 
Another mechanism is added in \AnimBach: conditional statements of the form $c \rightarrow
s_1 \diamond s_2$, which computes $s_1$ if $c$ evaluates to true or
$s_2$ otherwise. As a shorthand, $c \rightarrow s_1$ is used to compute $s_1$ when $c$ evaluates to true.
Conditions of type $c$ are obtained from elementary
ones, thanks to the classical and, or and negation operators, denoted
respectively by $\&$, $|$ and $!$. Elementary conditions are obtained
by relating set elements or mappings on them by equalities (denoted
$=$) or inequalities (denoted $\!=$, $<$, $<=$, $>$, $>=$).

Procedures are defined similarly to mappings through the \texttt{proc}
keyword by associating an agent with a procedure name. As in classical
concurrency theory, it is assumed that the defining agents are guarded, in
the sense that any call to a procedure is preceded by the execution
of a primitive or can be rewritten in such a form.

As an example, the behavior of a vertical truck in the rush hour
puzzle can be modelled by the following code:

\begin{lstlisting}
proc VerticalTruck(r: RCInt, c: RCInt) =
  ( (r>1 & r<5) -> ( get(free(pred(r),c)); tell(free(succ(succ(r)),c);
                     VerticalTruck(pred(r),c) )
  +
  ( (r<4) -> ( get(free(down_truck(r),c)); tell(free(r,c));
               VerticalTruck(succ(r),c)) ).
\end{lstlisting}

\noindent
To understand it, remember that a truck is identified by the upper and
left-most cell it occupies. The parameters of the
\texttt{VerticalTruck} procedure are precisely the row number and the
line number of this cell.  Given that a vertical truck can move one
cell up or one cell down, the procedure offers two alternatives
through the "+" operator. The first one corresponds to a truck moving
one cell up. To make this move realistic, the row $r$ occupied by the
truck should be strictly greater than one. Otherwise, the truck is
already on the first row (like the yellow truck of
Figure~\ref{rush-hour}) and cannot move up. Moreover, as we shall see
in a few seconds, the row $r$ should also be strictly smaller than
5. Assuming the two conditions hold ($r>1 \& r<5$) moving a truck one
cell up proceeds in three steps. First we need to make sure that the
cell up is free. This is obtained by getting the si-term
$free(pred(r),c))$ by means of the execution of the
\texttt{get(free(pred(r),c))}
primitive. Note that $pred(r)$ is actualy coded by a map as being
$r-1$. Second the cell liberated by moving the truck one cell up is to
be declared free. This is obtained by telling the corresponding $free$
si-term on the store, namely by executing
\texttt{tell(free(succ(succ(r)),c)}.  Note that $succ(r)$ is coded as $r+1$
by a map, which is why $r$ needs to be smaller than 5. Third the truck
procedure has to be called recursively with $pred(r)$ and $c$ as new
coordinates for the upper and left-most cell it occupies.

The behavior of the alternative movement in which the truck goes down
by one cell is similar. As $r$ is assumed to be in set $RCInt = \{ 1,
\cdots, 6 \}$ and we do not perform a $pred$ operation there is no
need to check that $r$ is greater or equal to 1. However to get the
cell down we need to check that $r$ is strictly less than 4.

\begin{figure}[t]
\begin{minipage}{15.5cm}
%\begin{center}
% \begin{scriptsize}
\[ \begin{array}{ccc}
    \begin{array}{r@{\hspace*{0.25cm}}c}
     {\bf(S) }
   & 
     \infrulemath{  \transm{ \conf{ A }{ \sigma } 
                      }{ \conf{ A' }{ \sigma' } 
                       } 
            }{ 
                \transm{ \conf{ \seqcc{A}{B} }{ \sigma } 
                      }{ \conf{ \seqcc{A'}{B} }{ \sigma' } 
                       } 
              } 
   \\ \\ 
     {\bf(P) }
   & 
     \infrulemath{  \transm{ \conf{ A }{ \sigma } 
                      }{ \conf{ A' }{ \sigma' } 
                       } 
            }{ 
                 \begin{array}{c} 
                     \transm{ \conf{ \paracc{A}{B} }{ \sigma } 
                           }{ \conf{ \paracc{A'}{B} }{ \sigma' } 
                            } 
                  \\ 
                     \transm{ \conf{ \paracc{B}{A} }{ \sigma } 
                           }{ \conf{ \paracc{B}{A'} }{ \sigma' } 
                            } 
                  \end{array} 
              } 
    \\ \\ 
     {\bf(C) }
   & 
     \infrulemath{  \transm{ \conf{ A }{ \sigma } 
                       }{ \conf{ A' }{ \sigma' } 
                        } 
            }{ 
                 \begin{array}{c} 
                     \transm{ \conf{ \choicec{A}{B} }{ \sigma } 
                           }{ \conf{ A' }{ \sigma' } 
                            } 
                  \\ 
                     \transm{ \conf{ \choicec{B}{A} }{ \sigma } 
                           }{ \conf{ A' }{ \sigma' } 
                            } 
                  \end{array} 
              } \\ \\
\end{array} & \hspace*{1cm} &     \begin{array}{r@{\hspace*{0.25cm}}c}
%     \\ \\ 
     {\bf(Co) }
   & 
     \infrulemath{  \models C, \, \transm{ \conf{ A }{ \sigma } 
                       }{ \conf{ A' }{ \sigma' } 
                        } 
            }{ 
                 \begin{array}{c} 
                     \transm{ \conf{ C \rightarrow A \diamond B }{ \sigma } 
                           }{ \conf{ A' }{ \sigma' } 
                            } 
                  \\ 
                     \transm{ \conf{ !C \rightarrow B \diamond A }{ \sigma } 
                           }{ \conf{ A' }{ \sigma' } 
                            } 
                  \end{array} 
              } 
    \\ \\ 
     {\bf(Pc) }
   & 
     \infrulemath{  P(\overline{x}) = A, \transm{ \conf{ A[\overline{x}/\overline{u} }{ \sigma } 
                       }{ \conf{ A' }{ \sigma' } 
                        } 
            }{ 
                     \transm{ \conf{ P(\overline{u}) }{ \sigma } 
                           }{ \conf{ A' }{ \sigma' } 
                            } 
            } \\ \\
       \end{array}
\end{array} \]
%\end{scriptsize}
%\end{center}
\end{minipage}

\caption{Transition rules for the operators
\label{fig-operators}}
\end{figure}

The operational semantics of complex agents is
defined through the transition rules of
Figure~\ref{fig-operators}.
They are quite classical. Rules
(S), (P) and (C) provide the usual semantics for sequential, parallel
and choice compositions.  As expected, rule (Co) specifies that the
conditional instruction \( C \rightarrow A \diamond B \) behaves as
$A$ if condition $C$ can be evaluated to true and as $B$
otherwise. Note that the notation $\models C$ is used to denote the
fact that $C$ evaluates to true.  Finally, rule (Pc) makes procedure
call $P(\overline{u})$ behave as the agent $A$ defining procedure $P$
with the formal arguments $\overline{x}$ replaced by the actual ones
$\overline{u}$.

In these rules, it is worth noting that we assume agents of the form
($E ; A$), ($E \parac A$) and ($A \parac E$) to be rewritten as $A$.

\subsection{A fragment of temporal logic}

Linear temporal logic is widely used to reason on dynamic
systems. The \Scan\ and \Anemone\ workbenches use a fragment of PLTL
\cite{Eme90}.

As usual, the logic employed relies on propositional state
formulae. In the coordination context, these formulae are to be
verified on the current contents of the store. Consequently, given a
structured piece of information $t$, the notation $\#t$ is introduced to denote
the number of occurrences of $t$ on the store and  basic
propositional formulae are defined as equalities or inequalities combining algebraic
expressions involving integers and number of occurrences of structured
pieces of information. An example of such a basic formulae is
$\#free(1,1) = 1$ which states that the cell of coordinates $(1,1)$ is
free.

Propositional state formulae are built from these basic formulae by
using the classical propositional connectors. On the point of notations, given
a store $\sigma$ and a propositional state formulae $PF$, we shall write
$\sigma \models PF$ to indicate that $PF$ is established on store $\sigma$.

The fragment of temporal logic used in \Scan\ and \Anemone\ is
then defined from these propositional state formulae by the following grammar :
\[  TF ::= PF \, | \, \mathit{Next}\,\, TF \, | \, PF\,  \mathit{Until}\,\, TF
\]
where $PF$ is a propositional formula. A classical use, on which we
shall focus in this paper, is to determine whether a propositional
state formulae can be reached at some state.  As an example, coming
back to the rush hour problem, if the red car indicates that it leaves
the grid by placing $out$ on the store, a solution to the rush problem
is obtained by verifying the formula
\[  \mathit{true}\;  \mathit{ Until } \, (\#out=1)
\]
which we shall subsequently abbreviate as
\( \mathit{Reach} (\#out=1)
\).

The algorithm used in \Scan\ and \Anemone\ to establish reach
% The algorithm used in \Anemone\ to establish reach
properties basically consists of a breadth-first search on the state
space engendered by an agent starting from the empty store. During
this search, for each newly created state, a test is made to check
whether the considered reach property holds.

Such an elementary algorithm works well for simple problems. However it
becomes difficult to use when more complex problems are tackled. One
of the reasons comes from the fact that states are duplicated many times by
interleaving. Consider for instance the code for the
\texttt{VerticalTruck} procedure introduced above. With primitives to
animate its execution and colors introduced for visualization purposes, its more
complete code is as follows:

\label{page-listing-vertical-truck}
\begin{lstlisting}
proc VerticalTruck(r: RCInt, c: RCInt, p: Colors) =
        ( (r>1 & r<5) ->  ( get(free(pred(r),c));
	                    moveTruck(pred(r),c,p);
	                    tell(free(succ(succ(r)),c));
                            VerticalTruck(pred(r),c,p) ))
        +
        ( (r<4) -> ( get(free(down_truck(r),c));
	             moveTruck(succ(r),c,p);
	             tell(free(r,c));
                     VerticalTruck(succ(r),c,p) )).
\end{lstlisting}

\noindent
\begin{sloppypar}
Consider now two vertical trucks in parallel and for illustration the
first three statements: \texttt{get(free(pred(r),c))}, \texttt{moveTruck(pred(r),c,p)}
\texttt{tell(free(succ(succ(r)),c))}. 
Interleaving them in the two parallel instances
of \texttt{VerticalTruck} is of no interest for checking whether $out$
has been produced since what really matters is the state resulting
after the three steps. Hence, provided the first get primitive
succeeds, the two other primitives may be executed in a row. This
observation leads us to introduce so-called guarded lists of
primitives.
\end{sloppypar}

% The guarded construct
\section{A guarded list construct}
\label{guarded-list}

A \textit{guarded list} of primitives is a construct of the form $ \;
[ p \rightarrow p_1, \cdots, p_n ] \; $ where $p$, $p_1$, \ldots,
$p_n$ are primitives, with the list $p_1$, \ldots, $p_n$ being
possibly empty. In that latter case, we shall write $[ p ]$ for
simplicity of the notations.

Basically, a guarded list of primitives is a list of primitives
containing at least one primitive. The reason for writing guarded
lists with an arrow and for calling it guarded comes from the fact
that, provided the first primitive can be successfully executed, all
the others are executed immediately after without rollback in case of
failure. It is of course the responsibility of the programmers to
guarantee that in case the first primitive can be successfully
evaluated the remaining primitives can also be successfully
executed. Note that this is obviously the case for tell primitives and
the graphical primitives which always succeed regardless of the
current content of the store. Note also that we shall subsequently
identify criteria to introduce guarded lists while preserving
correctness.

It is worth observing that guarded lists are atomic constructs which
makes them different from conditional statements. In two words, the
execution of $\; [ p \rightarrow p_1, \cdots, p_n ] \; $ is as
follows.  First the store is locked and the execution of $p$ is
tested. If it fails then no modification is performed on the store and
the store is released. Otherwise not only $p$ is executed but also
after $p_1$, \ldots, $p_n$ in a row. After that the store is
released. In contrast, the execution of the conditional statement
$c \rightarrow s_1 \diamond s_2$ amounts to check $c$, which does not
require to lock the store since conditions are built on comparing
si-terms and not their presence or absence on the store. If $c$ is
evaluated to true then $s_1$ is executed, which means that one step of
$s_1$ is done if this is possible. If $c$ is evaluated to false then
one step of $s_2$ is attempted.

\begin{figure}[t]
\begin{center}
%\fbox{%
\begin{minipage}{15.5cm}
% \begin{scriptsize}
%
\[ \begin{array}{r@{\hspace*{0.25cm}}c}
     {\bf(Le)}
& 
    $$\langle \; [] \; \vert \; \sigma \;\rangle \longrightarrow \langle \; E \;\vert \; \sigma \;\rangle$$\\ 
\\ \\ 
 
     {\bf(Ln)}        
& 
     $$\dfrac{ \langle \; p \;\vert\; \sigma \;\rangle \longrightarrow \langle\; E \;\vert\; \tau \;\rangle, \;
               \langle \; L \;\vert\; \tau \;\rangle \longrightarrow^{*} \langle\; E \;\vert\; \phi \;\rangle
       }{ \langle \; [ p | L ] \;\vert\; \sigma \;\rangle \longrightarrow \langle\; E \;\vert\; \phi \;\rangle}$$\\ 
\\ \\
     {\bf(GL)}
& 
     $$\dfrac{ \langle \; p \;\vert\; \sigma \;\rangle \longrightarrow \langle\; E \;\vert\; \tau \;\rangle, \;
               \langle \; L \;\vert\; \tau \;\rangle \longrightarrow^{*} \langle\; E \;\vert\; \phi \;\rangle
       }{ \langle \; [ p \rightarrow L ] \;\vert\; \sigma \;\rangle \longrightarrow \langle\; E \;\vert\; \phi \;\rangle}$$\\ 
\end{array}
\] 
% \end{scriptsize}
\end{minipage}
%}
\end{center}

\caption{Transition rules for guarded lists}
\label{fig-guarded-list}
\end{figure}

The operational semantics of guarded lists is defined by rules (Le),
(Ln) and (GL) of Figure~\ref{fig-guarded-list}. The first two rules
define the semantics of lists of primitives, as being successively
executed. Rule (Le) concerns the empty list of primitives $[ \, ]$
while rule (Ln) inductively specifies that of a non-empty list $[p|L]$
with $p$ the first primitive and $L$ is the list of the other
primitives\footnote{%
  These list notations $[ \, ]$ and $[p|L]$ come from the
  logic programming way of handling lists.}.
Rule (GL) then states that the guarded list $[ p \rightarrow L ]$
can do a computation step from the store $\sigma$
to $\phi$ provided the primitive $p$ can do a step changing the store
$\sigma$ to $\tau$ and provided the list of primitives $L$ can
change $\tau$ to $\phi$.

Of course, introducing guarded lists as an atomic construct reduces
the interleaving possibilities between parallel processes. This is in
fact what we want to achieve to get speed ups in the model checking
phase. However from a programming point of view, one needs to
guarantee that computations are kept in some way. This is the purpose
of the introduction of the histories and of their contractions.

\begin{sloppypar}
\begin{definition}
\mbox{ }
  
\begin{enumerate}

\item Define the set of computational histories (or histories for
  short) $\Shist$ as the set \( \Sstore^\infseq \cup
  \Sstore^\fseq.\{\delta^+,\delta^-\} \) where $Sstore$ denotes the
  set of stores (namely of finite multisets of final si-terms), the
  $\fseq$ and $\infseq$ symbols are used to respectively denote finite
  and infinite repetitions and where $\delta^{+}$ and $\delta^{-}$ are
  used as ending marks respectively denoting successful and failing
  computations.

\item A history $h_c$ is a contraction of an history $h$ if it can be
  obtained from the latter by removing a finite number (possibly 0)
  elements of it, except the terminating marks $\delta^{+}$ and
  $\delta^{-}$. This is subsequently denoted by $h_c \preceq h$.

\item Given a contraction
\( h_c = \sigma_0 .  \cdots . \sigma_n . \delta \)
(resp. \( h_c = \sigma_0 .  \cdots . \sigma_n . \cdots \))
of an history $h$, there are thus sequences of stores, $\overline{\sigma_0}$, \ldots, $\overline{\sigma_n}$ such that
\( h = \overline{\sigma_0} . \sigma_0 .  \cdots . \overline{\sigma_n}. \sigma_n . \delta \)
(resp. \( h = \overline{\sigma_0} . \sigma_0 .  \cdots . \overline{\sigma_n}. \sigma_n . \cdots \)).
For any logic formula $F$, the history $h_c$ is said to be $F$-preserving iff, for any i and for any store 
$\tau$ of $\overline{\sigma_i}$, one has $\tau \models F$ iff $\sigma_i \models F$. This is subsequently denoted as
$h_c \ll_F h$.

\end{enumerate}
\end{definition}
\end{sloppypar}

Contractions and $F$-preserving contractions can be lifted in an obvious way to sets of histories.

\begin{definition}
A set $S_c$ of histories is a contraction (resp. a $F$-preserving
contraction) of a set $S$ of histories if any history of $S_c$ is the
contraction (resp. a $F$-preserving contraction) of a history of
$S$. By lifting notations on histories, this is subsequently
denoted by $S_c \preceq S$ (resp. $S_c \ll_F S$).
\end{definition}

We can now define the history-based operational semantics as the one delivering
all the computational histories. To make it general, we shall define it on
any contents of the initial store.

\begin{definition}
Define the language $\gdLg$ as the \AnimBach\ language with the guard list construct.
\end{definition}

\begin{definition}
Define the operational semantics 
$\OpSemh:\gdLg\rightarrow \pnc\Shist$ 
as the following function.  For any agent $A$ and any store $\tau$
\begin{eqnarray*}
    \lefteqn{\OpSemh(A)(\tau) = }  \\ & &
            \bset 
              \sigma_0 .  \cdots . \sigma_n. \delta^{+}:
                        \tconf{A}{\sigma_0} 
                            \longrightarrow \cdots \longrightarrow
                            \tconf{E}{\sigma_n},
                         \sigma_0 = \tau, n \geq 0
            \eset
        \\ & &
        \cup 
            \bset 
              \sigma_0 .  \cdots . \sigma_n . \delta^{-}:
                        \tconf{A}{\sigma_0}
                             \longrightarrow \cdots \longrightarrow
                             \tconf{A_n}{\sigma_n} 
                             \not\longrightarrow,
                         \sigma_0 = \tau, A_n \not= E, n \geq 0
            \eset
        \\ & &
        \cup
            \bset 
              \sigma_0 .  \cdots . \sigma_n . \cdots :
                        \tconf{A}{\sigma_0} 
                            \longrightarrow \cdots \longrightarrow
                            \tconf{A_n}{\sigma_n} \longrightarrow \cdots,
                        \sigma_0 = \tau, \forall n \geq 0 : A_n \neq E
            \eset
\end{eqnarray*}

\end{definition}

We are now in a position to define the refinement of agents.

\begin{definition}
Agent $A$ is said to refine agent $B$ iff 
\( \OpSemh(A)(\tau) \preceq \OpSemh(B)(\tau)
\),
for any store $\tau$.
\end{definition}

The following proposition is a direct consequence of the above
definitions.  Its interest is to establish contractions and
$F$-preserving properties from a syntactic characterization.

\begin{proposition}
\label{prop-preserving-contraction}
\mbox{ }
  
\begin{enumerate}

\item If $p_1, \cdots, p_n$ are tell primitives or
  graphical primitives then for any primitive $p$, the guarded list
  $GL = [ p \rightarrow p_1, \cdots, p_n ]$ refines the sequential
  composition $SC = p; p_1; \cdots; p_n$. As a result, any reachable
  property proved on the stores generated by the execution of $GL$ from a given store $\tau$
  is also established on the stores generated by the execution of $SC$ from $\tau$.

\item Assuming additionally that the arguments of the tell primitives
  of $p_1, \cdots, p_n$ are distinct from the si-terms appearing in
  the reachable formulae $F$, then $GL$ is also a $F$-preserving
  contraction of $SC$. It results that $F$ is established on the stores resulting from the execution of $SC$ from any store $\tau$ iff
  it is established on the stores resulting from the execution of $GL$ from $\tau$.

\end{enumerate}
\end{proposition}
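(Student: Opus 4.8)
The plan is to unwind the two statements against the operational semantics. For part~1, the key observation is that the single computation step of $GL = [\,p \rightarrow p_1, \cdots, p_n\,]$ from a store $\tau$ produces exactly one resulting store $\phi$ — the one obtained by running $p$, then $p_1, \ldots, p_n$ in sequence — together with the terminating agent $E$ (by rules (GL), (Ln), (Le)). I would first check that, whenever $p_1, \ldots, p_n$ are $tell$ or graphical primitives, each of rules (T) and (Gr) is unconditionally applicable, so once $p$ succeeds from $\tau$ yielding some $\tau'$, the remaining primitives deterministically drive the store through $\tau' = \tau_1, \tau_2, \ldots, \tau_n = \phi$, producing the finite history fragment $\tau . \phi . \delta^{+}$ in $\OpSemh(GL)(\tau)$. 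The matching run of $SC = p; p_1; \cdots; p_n$ under rules (S) and the primitive rules produces the longer history $\tau . \tau_1 . \tau_2 . \cdots . \tau_n . \delta^{+}$, and the short one is obtained from the long one by deleting the intermediate stores $\tau_1, \ldots, \tau_{n-1}$ — a finite removal not touching the terminating mark — hence a contraction. If $p$ fails from $\tau$, then $GL$ cannot move (no applicable rule), giving $\tau . \delta^{-}$, and $SC$ likewise blocks at its first primitive, giving the same $\tau . \delta^{-}$; these are equal, hence trivially contractions. Quantifying over all $\tau$ gives $\OpSemh(GL)(\tau) \preceq \OpSemh(SC)(\tau)$, i.e.\ $GL$ refines $SC$. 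The stated consequence about reachable properties then follows because the set of stores appearing in histories of $GL$ is a subset of those appearing in histories of $SC$, so any $PF$ reached by a state of $GL$'s computation is reached by a state of $SC$'s.

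For part~2, I would strengthen the contraction of part~1 to an $F$-preserving one. The stores deleted in passing from the $SC$-history to the $GL$-history are the intermediate ones $\tau_1, \ldots, \tau_{n-1}$, each obtained from $\tau' = \tau_1$ by adding some occurrences of the si-terms that are the arguments of $tell(p_1), \ldots, tell(p_{n-1})$ (graphical primitives change nothing). Under the additional hypothesis that none of those si-terms occurs in $F$, each intermediate store differs from the surviving store $\phi$ (in the contraction's terminology, from the $\sigma_i$ that ``absorbs'' that block, namely $\phi$) only in the multiplicities of si-terms irrelevant to $F$; hence $\tau_j \models F$ iff $\phi \models F$ for every deleted $\tau_j$. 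This is precisely the condition $h_c \ll_F h$, so $GL$ is an $F$-preserving contraction of $SC$, and lifting to sets of histories gives $\OpSemh(GL)(\tau) \ll_F \OpSemh(SC)(\tau)$. Combined with part~1 (which already gives the ``if'' direction for reachability) and with the fact that every store of $GL$'s history is also a store of $SC$'s history with the same $F$-truth value, we get the biconditional: $F$ is reached in the $SC$-computation iff it is reached in the $GL$-computation.

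The main obstacle, I expect, is purely bookkeeping rather than conceptual: one has to be careful that the ``single step'' of $GL$ really does collapse to exactly one new store, which requires invoking rules (Ln) and (Le) to see that $[p_1, \ldots, p_n]$ runs to completion in the multi-step relation $\longrightarrow^{*}$ with a unique final store (this uses that each $p_j$ is $tell$ or graphical, hence both always-enabled and deterministic in its effect on the store). A secondary subtlety is the matching of the contraction's formal shape $h = \overline{\sigma_0}.\sigma_0.\cdots.\overline{\sigma_n}.\sigma_n.\delta$ to our situation: here the relevant decomposition has the deleted block $\overline{\sigma}$ consisting of $\tau_1, \ldots, \tau_{n-1}$ sitting before the surviving store $\phi$, with all earlier surviving stores ($\tau$, and $\tau_1$ if $p$ is not itself a $tell$ whose argument lies in $F$ — but we need not case-split, since $\tau_1$ survives in the contraction regardless) having empty deleted blocks. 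One should also note the edge case $n = 0$, where $GL = [p]$, $SC = p$, and both histories coincide, so the claims hold vacuously. Everything else is a routine induction on the length of the primitive list and a direct appeal to the definitions of $\preceq$, $\ll_F$, and $\OpSemh$.
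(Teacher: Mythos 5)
Your proof is correct, and it is essentially the argument the paper intends: the paper offers no explicit proof, stating only that the proposition ``is a direct consequence of the above definitions,'' and your unwinding of rules (GL), (Ln), (Le) against the definitions of $\preceq$, $\ll_F$ and $\OpSemh$ is exactly that direct consequence spelled out. The only blemish is a harmless indexing wobble in the second paragraph (the deleted block before $\phi$ consists of \emph{all} intermediate stores, including the store reached right after $p$, which is $F$-equivalent to $\phi$ under the stated hypothesis anyway), so nothing in the argument is affected.
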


For the study of expressiveness, it will be useful to turn to a
simpler semantics focusing on the resulting stores of finite
computations. Such a semantics is defined as follows.

\begin{definition}
\label{def-opsemf}
Define the {\em operational semantics\/} 
\(  \OpSemf: \gdLg \rightarrow \powerset{\Sstore \times \{\delta^{+}, \delta^{-}\} } \) 
as the following function: for any agent $A \in \gdLg$
\begin{eqnarray*} 
    \OpSemf(A) 
&   = 
&   \begin{array}[t]{l} 
       \bset  
           (\sigma,\delta^{+}) : 
           \conf{A}{\emptyset} \rightarrow^{*} \conf{E}{\sigma} 
       \eset 
    \\ 
    \cup 
    \\ 
       \mbox{} \bset  
           (\sigma,\delta^{-}) : 
           \conf{A}{\emptyset} \rightarrow^{*} \conf{B}{\sigma}  
                 \not\rightarrow,  
           B \not= E 
               \eset 
  \end{array} 
\end{eqnarray*} 
\end{definition}

It is immediate to verify that, for any agent $A$, the semantics 
$\OpSemf(A)$ is obtained by considering the final stores of the finite
histories of $\OpSemh(A)(\emptyset)$.

% Expressiveness results
\section{Expressiveness}
\label{expressiveness}

Although it is interesting to bring efficiency during model checking, the
guarded list construct also brings an increase of expressiveness. This
is evidenced in this section by using the notion of modular embedding
introduced in \cite{BP94}. As pointed out there, from a computational
point of view, all ``reasonable'' sequential programming languages are
equivalent, as they express the same class of functions.  Still it is
common practice to speak about the ``power'' of a language on the
basis of the expressibility or non-expressibility of programming
constructs. In general, a sequential language $L$ is considered to be
more expressive than another sequential language $L'$ if the
constructs of $L'$ can be translated in $L$ without requiring a
``global reorganization of the program'' \cite{Fel90}, that is, in a
compositional way. Of course the translation must preserve the
meaning, at least in the weak sense of preserving termination.
 
When considering concurrent languages, the notion of termination must
be reconsidered as each possible computation represents a possible
different evolution of a system of interacting processes. Moreover
{\em deadlock} represents an additional case of termination. 
% Following our previous work \cite{BJ98,BJ03b,DJL18,LJ04,LJ07,LJB04}, we shall thus
We shall consequently
rely on the operational semantics $\OpSemf$ of
Definition~\ref{def-opsemf}, focused on the final store of finite
computations together with the termination mark.

The basic definition of embedding, given by Shapiro \cite{Sha92} is 
the following. Consider two languages $L$ and $L'$. Moreover assume we are given the 
semantics mappings 
${\cal O}: L \rightarrow \mathit{Obs}$ and  
${\cal O'}: L' \rightarrow \mathit{Obs}'$, where $\mathit{Obs}$ and $\mathit{Obs}'$ are 
some suitable domains. 
Then $L$ can {\em embed} $L'$ if there exists a mapping $\cal C$ ({\em 
  coder}) from the statements of $L'$ to the statements of $L$, and 
a mapping $\cal D$ ({\em decoder}) from $\mathit{Obs}$ to $\mathit{Obs}'$, such 
that the diagram of Figure \ref{basic-embedding} commutes, namely such that  
for every statement $A \in L'$: 
\( {\cal D} ( {\cal O} ( {\cal C} (A) ) )  =  {\cal O}' ( A ) 
\).

%%%%%%%%%%%%%%%%%%%%%%%%%%%%%%%%%%%%%%%%%%%%%%%%%%%%%%%%%%%%%%%%%%%%%%%%%% 
 
\begin{figure}[t]
\begin{center} 
%\begin{picture}(160,20)(-40,100) 
\begin{picture}(160,100)(-40,85) 
% \thicklines 
 
\put(  0,180){\makebox(0,0)[lb]{$L'$}} 
\put(  0,100){\makebox(0,0)[lb]{$L$}} 
\put( 80,180){\makebox(0,0)[lb]{$\mathit{Obs}'$}} 
\put( 80,100){\makebox(0,0)[lb]{$\mathit{Obs}$}} 
 
\put( 15,103){\vector(1,0){60}} 
\put( 15,183){\vector(1,0){60}} 
\put(  3,172){\vector(0,-1){60}} 
\put( 83,112){\vector(0,1){60}} 
 
\put( 40,187){\makebox(0,0)[lb]{${\cal O}'$}} 
\put( 40,90){\makebox(0,0)[lb]{${\cal O}$}} 
 
\put( -10,140){\makebox(0,0)[lb]{${\cal C}$}} 
\put( 87,140){\makebox(0,0)[lb]{${\cal D}$}} 
\end{picture} 
\end{center} 
\caption{Basic embedding.} 
\label{basic-embedding} 
\end{figure}
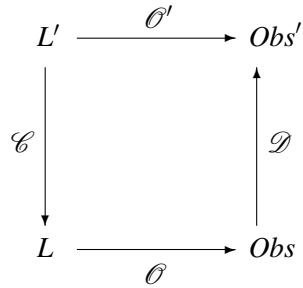 
 
%%%%%%%%%%%%%%%%%%%%%%%%%%%%%%%%%%%%%%%%%%%%%%%%%%%%%%%%%%%%%%%%%%%%%%%%%% 
 
%\noindent 
The basic notion of embedding is too weak since, for instance, 
the above equation is satisfied by any pair of Turing-complete 
languages. 
De Boer and Palamidessi hence proposed in \cite{BP94} to add three 
constraints on the coder $\cal C$ and on the decoder $\cal D$ in order 
to obtain a notion of {\em modular} embedding usable for concurrent 
languages:

\begin{enumerate}

\item $\cal D$ should be defined in an element-wise way with respect
  to $\cal 
O$:  
\begin{labeleqn}{\mbox{($P_1$)}} 
\forall X \in \mathit{Obs}: \  
{\cal D}(X) = \{{\cal D}_{el}(x) \mid x \in X \} 
\end{labeleqn}% 
for some appropriate mapping ${\cal D}_{el}$;

\item
the coder $\cal C$ should be defined in a compositional 
way with respect to the sequential, parallel and choice operators\footnote{% 
Actually, this is not required for the sequential operator   
in \cite{BP94} since it does not occur in that work.}: 
\begin{labeleqn}{\mbox{($P_2$)}} 
\begin{array}{c} 
{\cal C}(A \seqc B) = {\cal C}(A) \seqc {\cal C}(B) \\ 
{\cal C}(A \parac B) = {\cal C}(A) \parac {\cal C}(B) \\ 
{\cal C}(A \choice B) = {\cal C}(A) \choice {\cal C}(B) \\ 
\end{array} 
\end{labeleqn}% 

\item 
the embedding should preserve the behavior of the original 
processes with respect to deadlock, failure and success ({\em termination 
  invariance}): 
\begin{labeleqn}{\mbox{$(P_3$)}} 
\forall X \in \mathit{Obs}, \forall x \in X: \  
tm'({\cal D}_{el}(x)) = tm (x) 
\end{labeleqn}% 
where $tm$ and $tm'$ extract the information on termination from the 
observables of $L$ and $L'$, respectively. 

\end{enumerate}

\noindent 
An embedding is then called {\em modular} if it satisfies properties  
$P_1$, $P_2$, and $P_3$. 

The existence of a modular embedding from $L'$ into $L$ 
is denoted as $L' \leq L$.
It is easy to see that $\leq$ is a pre-order relation.
Moreover if $L' \subseteq L$ then $L' \leq L$ that
is, any language embeds all its sublanguages. This property descends
immediately from the definition of embedding, by setting $\cal C$ and
$\cal D$ equal to the identity function.

Let us now compare the \AnimBach\ language with guarded
lists with the \AnimBach\ language without guarded lists. As
introduced before, the former is denoted by $\gdLg$. The latter will
be denoted by $\rLg$. 
% Following our previous work, we shall also test
Following~\cite{BJ03b}, we shall also test
three sublanguages composed (i) of the ask, tell primitives, (ii) of
the ask, tell, get primitives and (iii) of the ask, tell, get, nask
primitives. These sublanguages will be denoted by specifying the
primitives between parentheses, as in $\gdLg(\ask,\tell)$. Moreover,
to focus on the core features, we shall discard conditional statements
and procedures, which are essentially introduced for the ease of
coding applications.

By language inclusion, a first obvious result is that the
\AnimBach\ sublanguages with guarded lists embed their counterparts
without guarded lists.

\begin{proposition}
For any subset ${\cal X}$ of primitives, one has
\( \rLg({\cal X}) \leq \gdLg({\cal X}) \).
\end{proposition}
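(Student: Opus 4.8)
The plan is to invoke the general principle, already stated in the excerpt, that $L' \subseteq L$ implies $L' \leq L$ via the identity coder and decoder. So the proof reduces to checking that $\rLg({\cal X})$ is literally a sublanguage of $\gdLg({\cal X})$: every agent built from the primitives in ${\cal X}$ using sequential, parallel and choice composition (recall we have discarded conditionals and procedures for this comparison) is also an agent of $\gdLg({\cal X})$, since $\gdLg({\cal X})$ is by definition $\rLg({\cal X})$ augmented with the guarded list construct and hence contains all the syntax of $\rLg({\cal X})$.

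Concretely, I would first observe that the syntax of $\gdLg({\cal X})$ strictly extends that of $\rLg({\cal X})$, and that the transition rules governing the common constructs — rules (T), (A), (G), (N), (Gr) from Figure~\ref{jmj-fig-primitives} and rules (S), (P), (C) from Figure~\ref{fig-operators} — are identical in the two languages; the rules (Le), (Ln), (GL) of Figure~\ref{fig-guarded-list} only add new behaviour for the new construct and do not alter the derivations of old agents. Consequently, for any agent $A \in \rLg({\cal X})$ and any store $\sigma$, the transition system rooted at $\tconf{A}{\sigma}$ is the same whether computed in $\rLg({\cal X})$ or in $\gdLg({\cal X})$, so $\OpSemf$ (and indeed $\OpSemh$) agrees on $A$ in both languages.

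Then I would take ${\cal C}$ and ${\cal D}$ (equivalently ${\cal D}_{el}$) to be the identity maps. Property $P_1$ holds because the identity decoder is trivially element-wise. Property $P_2$ holds because the identity coder is a homomorphism for $\seqc$, $\parac$ and $\choice$: ${\cal C}(A \seqc B) = A \seqc B = {\cal C}(A) \seqc {\cal C}(B)$, and similarly for the other two operators. Property $P_3$ holds because the termination mark component of an observable in $\OpSemf$ is literally carried over unchanged by the identity, so $tm'({\cal D}_{el}(x)) = tm(x)$. Finally the commutation of the diagram, ${\cal D}({\cal O}({\cal C}(A))) = {\cal O}'(A)$, is exactly the statement that $\OpSemf$ restricted to $\rLg({\cal X})$-agents coincides with $\OpSemf$ on those same agents viewed inside $\gdLg({\cal X})$, which is the conservativity observation made above.

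There is essentially no obstacle here; the only point requiring a word of care is the conservativity claim — that adding the guarded list rules does not create new transitions for agents that do not syntactically contain a guarded list. This is immediate by structural induction on $A$: no rule among (Le), (Ln), (GL) has a conclusion whose left-hand agent is a primitive from ${\cal X}$, a sequential, parallel or choice composition, so such rules can never fire at the root of a guarded-list-free agent, and the inductive hypothesis handles the subagents. Hence the result follows.
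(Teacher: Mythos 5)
Your proof is correct and follows exactly the paper's route: the paper establishes this proposition purely by the language-inclusion principle stated just before it, taking $\cal C$ and $\cal D$ to be the identity. Your additional verification of $P_1$--$P_3$ and of the conservativity of the transition rules for guarded-list-free agents merely spells out what the paper treats as immediate.
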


The converse relations do not hold. Intuitively, this is due to the
fact that, in contrast to $\rLg$, the languages $\gdLg$ have the
possibility of {\em atomically} testing the simultaneous presence of
two si-terms on the store. The formal proof requires of course a
deeper treatment. It turns out however that the techniques employed in
\cite{BJ03b} can be adapted to guarded lists. One of them, which
results from classical concurrency theory, is that any agent can be
reformulated in a so-called normal form.

\begin{definition} 
Agents (of $\gdLg$) in normal forms are agents of $\gdLg$ which obey the following grammar,
where  
$N$ is an agent in normal form, $p$ is a primitive (either graphical or store-related) or a guarded list of primitives
and $A$ denotes an arbitrary (non restricted) agent
\begin{eqnarray*} 
    N   & ::= &   p \ | \ \seqcc{p}{A} \ | \ \choicec{N}{N}.
\end{eqnarray*} 
\end{definition} 
 
\begin{proposition}
\label{equiv-normal-form}
For any agent $A$ of $\gdLg$, there is an agent $N$ of $\gdLg$ in normal form  which has the same derivation sequences as $A$.
\end{proposition}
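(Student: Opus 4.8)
The statement is the familiar ``normal form / expansion'' result of process algebra, and I would establish it by structural induction on the agent $A$ of $\gdLg$. It is convenient to write $A \equiv B$ when, for every store $\sigma$, the configurations $\conf{A}{\sigma}$ and $\conf{B}{\sigma}$ give rise to the same derivation sequences (so in particular $\OpSemh(A)(\tau)=\OpSemh(B)(\tau)$ for every $\tau$); one checks from the format of the rules of Figures~\ref{fig-operators} and~\ref{fig-guarded-list} that $\equiv$ is a congruence for $\seqc$, $\parac$ and $\choice$ and that it is compatible with the conventions $E\seqc A=A$, $E\parac A=A$, $A\parac E=A$. A preliminary remark is that guarded lists require no special handling: by rules (Le), (Ln) and (GL), a construct $[\,p\rightarrow L\,]$ either performs one atomic transition to some $\conf{E}{\phi}$ or is blocked, just like a store-related or graphical primitive, so it is legitimately covered by the meta-variable $p$ in the normal-form grammar.

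The base case $A=p$ is trivial since $p$ is already in normal form. For $A=B\choice C$, the induction hypothesis yields normal forms $N_B\equiv B$ and $N_C\equiv C$, and then $N_B\choice N_C$ is a normal form with $N_B\choice N_C\equiv B\choice C$ by rule (C) and congruence. For $A=B\seqc C$, I would first pass to $N_B\equiv B$, so $A\equiv N_B\seqc C$, and then appeal to the auxiliary claim, itself proved by induction on the structure of $N$: \emph{for every normal form $N$ and every agent $C$ there is a normal form $N'$ with $N\seqc C\equiv N'$}. Its base cases are $N=p$ (take $N'=p\seqc C$) and $N=p\seqc D$ (take $N'=p\seqc(D\seqc C)$, using associativity of sequential composition), and its inductive case $N=M_1\choice M_2$ uses right-distributivity of $\seqc$ over $\choice$, i.e. $(M_1\choice M_2)\seqc C\equiv(M_1\seqc C)\choice(M_2\seqc C)$, followed by the inner induction hypothesis on $M_1$ and $M_2$; the associativity and distributivity laws are verified directly against rules (S), (C) and the $E$-conventions.

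The real work is the parallel case $A=B\parac C$. Using the induction hypothesis, write the normal forms of $B$ and $C$ as choices of summands, $N_B\equiv\sum_i\alpha_i$ and $N_C\equiv\sum_j\beta_j$, where each $\alpha_i$ is $p_i$ or $p_i\seqc D_i$ and similarly for the $\beta_j$. I would then prove the expansion law
\[
N_B\parac N_C\;\equiv\;\Big(\sum_i\gamma_i\Big)\choice\Big(\sum_j\delta_j\Big),
\]
where $\gamma_i$ is $p_i\seqc(D_i\parac N_C)$ if $\alpha_i=p_i\seqc D_i$ and $p_i\seqc N_C$ if $\alpha_i=p_i$ (using $E\parac N_C=N_C$), and symmetrically for the $\delta_j$. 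Each $\gamma_i$ and $\delta_j$ has the shape $p\seqc A$, so the right-hand side is a normal form, and it is $\equiv$ to $B\parac C$ by congruence. Establishing the displayed equivalence is the main obstacle: one has to match, transition by transition and using rule (P) in both its interleavings, rule (C), rule (S) and the identities $E\parac N=N=N\parac E$, the first steps of $N_B\parac N_C$ with those of the right-hand side and check that the continuations remain $\equiv$-equal, taking particular care that a summand blocked on one side does not wrongly block the other, so that deadlock behaviour is preserved. Everything else is routine bookkeeping.
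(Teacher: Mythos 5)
Your proof is correct and follows essentially the same route as the paper, which packages the argument as an explicit translation $\tau$ into normal form together with a left-merge operator $\leftmerge$ realizing exactly your expansion law for the parallel case. If anything you are more careful on the sequential case: the paper simply sets $\tau(X;Y)=\tau(X);Y$, which is not literally in the normal-form grammar when $\tau(X)$ is a choice or a guarded sequence, and your auxiliary claim (distributing $;$ over $+$ and reassociating) supplies precisely the step needed to repair that.
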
 

\begin{proof}
Indeed, it is possible to associate to any agent $A$ an agent $\tau(A)$ 
in normal form by using the following translation defined inductively on the structure of $A$:
\begin{eqnarray*}
      \tau(p) & = & p 
\\    \tau(X;Y) & = & \tau(X) ; Y
\\    \tau(X+Y) & = & \tau(X) + \tau(Y)
\\    \tau(X \parallel Y) & = & \tau(X) \leftmerge Y + \tau(Y) \leftmerge X
\\
\\    p \leftmerge Z & = & p; Z
\\    (p; A) \leftmerge Z & = & p; (A \parallel Z)
\\    (N_1 + N_2) \leftmerge Z & = & N_1 \leftmerge Z + N_2 \leftmerge Z
\end{eqnarray*}
It is easy to verify that, for any agent $A$, the agent $\tau(A)$ is
in normal form. Moreover, it is straightforward to verify that $A$ and
$\tau(A)$ share the same derivation sequences.
\end{proof}

We are now in a position to establish that $\gdLg(ask, tell)$ cannot be embedded in
$\rLg(ask, tell)$.

\begin{proposition}
\( \gdLg(ask, tell) \not\leq \rLg(ask, tell) \)
\end{proposition}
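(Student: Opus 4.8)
The plan is to argue by contradiction, using the standard coordination-theory strategy of De Boer and Palamidessi adapted in \cite{BJ03b}: suppose a modular embedding $\mathcal{C},\mathcal{D}$ of $\gdLg(ask,tell)$ into $\rLg(ask,tell)$ exists, and exhibit a program whose observable behaviour cannot be reproduced. The key phenomenon to exploit is that a guarded list such as $G = [\,get\text{-free guard}\,]$ — concretely, something built from $ask$ and $tell$ like $GL = [\,ask(a) \rightarrow tell(c)\,]$ in parallel with $GL' = [\,ask(b) \rightarrow tell(d)\,]$ — can atomically commit to a success only after testing, whereas in $\rLg(ask,tell)$ every primitive is a separate observable step and, crucially, $ask$ and $tell$ are both monotone on the store (nothing is ever removed). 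The first step is therefore to pick a discriminating agent: a good candidate is one that, depending on interleaving, must be able both to deadlock and to succeed in $\gdLg(ask,tell)$, where the deadlock arises precisely because one guarded list atomically consumes the ``testing window'' that a competing component needs. Since $ask$/$tell$ never decrease the store, in $\rLg(ask,tell)$ one shows the corresponding coded agent cannot deadlock in the required configuration — monotonicity forces any enabled $ask$ to remain enabled.

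The technical backbone is Proposition~\ref{equiv-normal-form}: I would first replace $\mathcal{C}(A)$ by its normal form $\tau(\mathcal{C}(A))$, which has the same derivation sequences, so that the coded programs are finite sums of guarded sequential threads $p; A$ with $p$ an $ask$ or $tell$. Then, using $P_2$ (compositionality of $\mathcal{C}$ over $;,\parallel,+$) and $P_1$, $P_3$ (element-wise, termination-invariant decoder), I would track how the observable set $\OpSemf$ of the coded parallel composition $\mathcal{C}(A\parallel B) = \mathcal{C}(A)\parallel\mathcal{C}(B)$ relates to that of $\mathcal{C}(A)$ and $\mathcal{C}(B)$ run in isolation. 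The decisive combinatorial lemma — essentially the one from \cite{BJ03b} — is that in $\rLg(ask,tell)$, because the store only grows, if each of $\mathcal{C}(A)$ and $\mathcal{C}(B)$ individually has a successful computation from the empty store, then $\mathcal{C}(A)\parallel\mathcal{C}(B)$ also has one (you can serialize: run $\mathcal{C}(A)$ to completion, the larger store only helps $\mathcal{C}(B)$'s $ask$s and is irrelevant to its $tell$s). So a coded parallel program cannot introduce a new deadlock relative to its components. I would then choose the original $\gdLg(ask,tell)$ witness $A\parallel B$ so that $A$ and $B$ each succeed alone but $A\parallel B$ can deadlock (because the atomic guarded list in $A$ grabs a token and produces another that disables $B$'s guard in a way a non-atomic interleaving could have avoided). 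By $P_3$ the decoded observables must include this deadlock, contradicting the monotonicity lemma.

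Concretely I expect the witness to look like: $A = tell(t);\,[\,ask(t)\rightarrow get\text{-like effect via }tell\,]$ paired with a $B$ whose only guard tests absence-of-progress indirectly — but since $nask$ is unavailable, the real leverage must be the \emph{atomicity} (no rollback) of the guarded list committing a two-$tell$ burst that a competitor's single $ask$ step cannot wedge between. So the precise agent will be of the form: two components that, run interleaved without atomicity, always have a schedule reaching a ``committed'' state, whereas with one component's steps forced to be contiguous, a race makes the other's continuation impossible. The main obstacle is constructing this witness: with only $ask$ and $tell$ (no $get$, no $nask$) the store is purely monotone even in $\gdLg(ask,tell)$, so the deadlock in $\gdLg(ask,tell)$ must itself be explained — it comes from the \emph{syntactic} structure (a component that, after its guarded list fires, is left with an $ask$ on a token that will never be told), and the subtlety is that the bad interleaving must be reachable in $\gdLg$ but the ``good'' serialization that the monotonicity lemma provides in $\rLg$ must \emph{not} correspond, via $\mathcal{D}$, to any $\gdLg$ computation of the original. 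Pinning down this mismatch — showing that $P_1$–$P_3$ plus normal forms genuinely force the coded program to mirror both the success and the deadlock, so that the monotonicity lemma bites — is where the real work lies, and it is exactly the adaptation of the \cite{BJ03b} argument that the paper advertises.
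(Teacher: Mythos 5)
Your framework is the right one (contradiction, normal forms via Proposition~\ref{equiv-normal-form}, properties $P_1$--$P_3$, monotonicity of the store under $ask$/$tell$), but the discriminating scenario you propose does not work, and you half-admit as much at the end. Your plan hinges on a witness $A \parallel B$ in $\gdLg(ask,tell)$ such that $A$ and $B$ each succeed alone but $A \parallel B$ can deadlock, together with a ``monotonicity lemma'' saying that in $\rLg(ask,tell)$ the coded parallel composition of individually succeeding components also succeeds. The problem is twofold. First, that lemma only asserts the \emph{existence} of a successful computation of $\mathcal{C}(A)\parallel\mathcal{C}(B)$; it says nothing about the \emph{absence} of a deadlocking one, so there is no contradiction with $P_3$ --- the coded program may perfectly well exhibit both a success and a deadlock, exactly mirroring the source. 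Second, any parallel deadlock-despite-individual-success one can build in the $ask$/$tell$ fragment (e.g.\ via a choice that commits to a branch waiting on a token nobody tells) already exists in $\rLg(ask,tell)$ without guarded lists, so it cannot separate the two languages. Atomicity of guarded lists over a monotone store does not create new deadlocks under parallel composition; it creates a new \emph{test}.

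That test is the actual lever, and it is what the paper's proof uses: the witness is the purely sequential program $tell(a); tell(b); AB$ with $AB = [ask(a) \rightarrow ask(b)]$, a guarded list that can fire only if $a$ and $b$ are \emph{simultaneously} present. One first fixes terminating computations of $\mathcal{C}(tell(a))$ and $\mathcal{C}(tell(b))$ ending in stores $\{a_1,\dots,a_m\}$ and $\{b_1,\dots,b_n\}$, reproducible on any store by monotonicity. Then, putting $\mathcal{C}(AB)$ in normal form, one rules out $tell$-guarded and graphical alternatives (they would let $\mathcal{C}(AB + tell(a))$ fail), and shows that no $ask(u_j)$ guard can lie in $\{a_1,\dots,a_m\}$ or in $\{b_1,\dots,b_n\}$ --- otherwise $\mathcal{C}(tell(a); (AB + ask(a)))$ (resp.\ the $b$-variant) would inherit a prefix that can only be continued by failing suffixes, contradicting $P_3$. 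Since each alternative of the normal form is guarded by a \emph{single} non-atomic $ask$, $\mathcal{C}(AB)$ is then stuck on the union store $\{a_1,\dots,a_m,b_1,\dots,b_n\}$, so $\mathcal{C}(tell(a);tell(b);AB)$ has a failing computation while $tell(a);tell(b);AB$ has only a successful one. The essential idea you are missing is this reduction of the separation to ``one atomic guard can only inspect one token'', exploited through compositionality over sequential composition and choice rather than over parallel composition.
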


\begin{sloppypar}
\begin{proof}
Let us proceed by contradiction and assume the existence of a coder
$\coder$ and a decoder $\decoder$. The proof is composed of three main steps.

\medskip
\noindent
\textsc{Step 1:}\ on the coding of $tell(a)$ and $tell(b)$.
Let $a$, $b$ be two distinct si-terms. Since 
\( \OpSemf( \gtell{a} ) 
    = \bset ( \bset a \eset, \delta^{+} ) \eset
\),
any computation of $\coder(\gtell{a})$ starting in the empty store
succeeds by property $P_3$. Let 
\[         \conf{ \coder(\gtell{a}) }{ \emptystore }
           \longrightarrow
           \cdots
           \longrightarrow
           \conf{E}{ \bset a_1, \cdots, a_m \eset }
\]
be one computation of $\coder(\gtell{a})$. Similarly, any computation of
$\coder(\gtell{b})$ starting on the empty store succeeds. Let
\[         \conf{ \coder(\gtell{b}) }{ \emptystore }
           \longrightarrow
           \cdots
           \longrightarrow
           \conf{E}{ \bset b_1, \cdots, b_n \eset }
\]
be one computation of $\coder(\gtell{b})$. Note that, as we only
consider ask and tell primitives, this computations can be reproduced
on any store $\tau$. We thus have also that
\[         \conf{ \coder(\gtell{b}) }{ \tau }
           \longrightarrow
           \cdots
           \longrightarrow
           \conf{E}{ \tau \cup \bset b_1, \cdots, b_n \eset }
\]
In particular, as
\( \coder( \gtell{a}; \gtell{b} )
= \coder(\gtell{a}); \coder(\gtell{b})
\),
we have that
\[ \begin{array}[t]{@{}l}
       \conf{ \coder(\gtell{a}; \gtell{b} ) }{ \emptyset }
           \longrightarrow
           \cdots \\ \hspace*{2cm}
           \longrightarrow
           \conf{ \coder(\gtell{b} ) }{ \bset a_1, \cdots, a_m \eset } 
           \longrightarrow
           \cdots \\ \hspace*{2cm}
           \longrightarrow
           \conf{E}{ \bset a_1, \cdots, a_m, b_1, \cdots, b_n \eset }
   \end{array}
\]           

\medskip
\noindent
\textsc{Step 2:}\ coding of an auxiliary statement $AB$.
Consider now 
\( AB= \glist{ask(a)}{ask(b)} \).
Obviously, as it requires $a$ to be present, the execution of
$AB$ on the empty store cannot do any step and thus 
\( \OpSemf( AB ) = \bset ( \emptyset, \delta^{-} ) \eset
\).
Let us now turn to its coding $\coder(AB)$. By Proposition~\ref{equiv-normal-form}, it can be regarded in 
its normal form. As it is in $\rLg(tell,ask)$, its more general form is as follows
\[       tell(t_1) \seqc A_1 
     +   \cdots
     +   tell(t_p) \seqc A_p 
     +   ask(u_1) \seqc B_1
     +   \cdots
     +   ask(u_q) \seqc B_q
     +   gp_1 \seqc C_1
     +   \cdots
     +   gp_r \seqc C_r
\]
where $gp_1$, \ldots, $gp_r$ are graphical primitives.     
Let us first establish that there is no alternative guarded by a
$tell(t_i)$ operation. Indeed, if this was the case, then
\[     D =  \conf{\coder( AB )}{ \emptyset }
           \longrightarrow
           \conf{A_i}{ \bset t_i \eset }
\]
would be a valid computation prefix of 
$\coder(AB)$. As 
\( \OpSemf( AB ) = \bset ( \emptyset, \delta^{-} ) \eset
\), this prefix should deadlock afterwards.
However, as
\( \coder( AB + \gtell{a} ) = \coder(AB) + \coder(\gtell{a})
\),
the computation step $D$ is also a valid computation prefix of
\mbox{$\coder( AB + \gtell{a} )$}. Hence,
\mbox{$\coder( AB + \gtell{a} )$} admits a
failing computation which, by 
property $P_3$, contradicts the fact that
\( \OpSemf ( AB + \gtell{a} )
   =
   \bset ( \bset a \eset, \delta^{+} ) \eset 
\).
%\end{sloppypar}
The proof of the absence of an alternative guarded by a graphical primitive $gp_i$ proceeds similarly.

Let us now establish that none of the $u_i$'s belong to
\( \bset a_1, \cdots, a_m \eset
   \cup
   \bset b_1, \cdots, b_n \eset
\).
Indeed, if $ u_j \in \bset a_1, \cdots, a_m \eset$
for some $j \in \bset 1, \cdots, q \eset$, then, as
\( \coder(\gtell{a}\seqc AB)
   = \coder(\gtell{a}) \seqc \coder(AB)
\),
the derivation
\[     D' =  \conf{ \coder(\gtell{a}\seqc AB) }{ \emptystore }
           \longrightarrow
           \cdots
     \begin{array}[t]{@{}l}
           \longrightarrow
           \conf{\coder(AB)}{ \bset a_1, \cdots, a_m \eset } \\
           \longrightarrow
           \conf{B_j}{ \bset a_1, \cdots, a_m \eset }
     \end{array}
\]
is a valid computation prefix of $\coder(\gtell{a}\seqc AB)$. However,
by applying rule (T),
\[ \conf{ \gtell{a}\seqc AB }{ \emptystore }
   \longrightarrow
   \conf{ AB }{ \{ a \} }
   \not\longrightarrow
\]
By Property $P_3$, it follows that $D'$ can
only be continued by failing suffixes. However,
thanks to the fact that
\( \coder(\gtell{a}\seqc (AB + \gask{a})) =
   \coder(\gtell{a}) \seqc (\coder(AB) + \coder(\gask{a}))
\) 
the prefix
$D'$ induces the following 
computation prefix $D''$ for
\( \coder(\gtell{a} \seqc ( AB + \gask{a} )) \)
\[     D'' =  
     \begin{array}[t]{@{}l}
           \conf{ \coder(\gtell{a}\seqc (AB + \gask{a})) }{ \emptystore }
           \longrightarrow
           \cdots \\ \hspace*{2cm}
           \longrightarrow
           \conf{\coder(AB)+\coder(\gask{a})}{ \bset a_1, \cdots, a_m \eset } \\
           \hspace*{2cm}
           \longrightarrow %\\ %\hspace*{4cm}
           \conf{B_j}{ \bset a_1, \cdots, a_m \eset } .
     \end{array}
\]
which can only be continued by failing suffixes whereas
\( \gtell{a} \seqc ( AB + \gask{a} )
\)
only admits a successful computation.

The proof proceeds similarly in the case
$ u_j \in \bset b_1, \cdots, b_n \eset$
for some $j \in \bset 1, \cdots, q \eset$
by then considering
\( \gtell{b} \seqc  AB \)
and
\( \gtell{b} \seqc ( AB + \gask{b} ) \).

\medskip
\noindent
\textsc{Step 3:} combining the first two steps to produce a contradiction. 
The $u_i$'s are thus forced not to belong to 
\( \bset a_1, \cdots, a_m \eset
   \cup
   \bset b_1, \cdots, b_n \eset
\). 
However, this induces a contradiction. To that end, let us first observe that
$\coder(AB)$ cannot do any step on the store $\bset a_1, \cdots, a_m, b_1, \cdots, b_n \eset$
since none of the $ask(u_i)$ primitives can do a step.
As a result,
\[ \conf{AB}{ \bset a_1, \cdots, a_m, b_1, \cdots, b_n \eset }
   \not\longrightarrow
\]
Now, by compositionality of the coder with respect to the sequential composition (property $P_2$),
\( \coder(\gtell{a} \seqc \gtell{b} \seqc AB)
   =
   \coder(\gtell{a}) \seqc \coder(\gtell{b}) \seqc \coder(AB)
\),
and consequently the following derivation is valid:
\[  \conf{ \coder(\gtell{a} \seqc \gtell{b} \seqc AB) }{ \emptystore }
           \longrightarrow
           \cdots
           \longrightarrow
           \conf{AB}{ \bset a_1, \cdots, a_m, b_1, \cdots, b_n \eset }
\]
and yields a failing computation for
\( \coder(\gtell{a} \seqc \gtell{b} \seqc AB)
\).
However, as easily checked, $\gtell{a} \seqc \gtell{b} \seqc AB$ has only 
one successful computation.
% \qed
\end{proof}
\end{sloppypar}

Using similar arguments as in \cite{BJ03b}, it is possible to extend
the previous proof so as to establish the following results.
\begin{proposition}
\mbox{ } 
\begin{enumerate}
\item \( \gdLg(get, tell) \not\leq \rLg(get, tell) \)
\item \( \gdLg(ask, get, tell) \not\leq \rLg(ask, get, tell) \)    
\item \( \gdLg(ask, nask, get, tell) \not\leq \rLg(ask, nask, get, tell) \)    
\end{enumerate}
\end{proposition}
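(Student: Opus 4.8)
The plan is to adapt the three–step argument just given for $\gdLg(ask,tell)\not\leq\rLg(ask,tell)$ to each of the three fragments, changing only the separating agent $AB$ so that it stays within the fragment at hand and re-checking the few points where the new primitives interfere. For item~1, where $ask$ is unavailable, I would take $AB = [\,get(a)\to get(b)\,]$ for distinct si-terms $a,b$; for items~2 and~3 I would keep $AB = [\,ask(a)\to ask(b)\,]$. In each case $AB$ cannot move on the empty store, so $\OpSemf(AB)=\{(\emptyset,\delta^{-})\}$, whereas $tell(a);tell(b);AB$ has a single, successful computation — exactly the two facts the original proof exploits. One then assumes a coder $\coder$ and decoder $\decoder$ and re-runs Steps~1--3.

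In Step~1 I would fix successful computations of $\coder(tell(a))$ and $\coder(tell(b))$ from $\emptyset$ reaching stores $\{a_1,\dots,a_m\}$ and $\{b_1,\dots,b_n\}$; the only property used is that such a computation can be replayed on any larger store with the same multiset net effect. For items~1 and~2 this still holds, since an $ask$ merely tests presence and a $get$ may always delete the same occurrence it deleted before (that occurrence having been produced by an earlier $tell$ of the same run); hence $\coder(tell(a);tell(b))$ has a computation from $\emptyset$ reaching $\{a_1,\dots,a_m,b_1,\dots,b_n\}$. In Step~2 I would put $\coder(AB)$ in normal form via Proposition~\ref{equiv-normal-form}; its alternatives may now also be guarded by $get$ (items~1--3) or $nask$ (item~3). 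Alternatives whose guard already fires on $\emptyset$ — those headed by a $tell$, a graphical primitive, or a $nask$ — are ruled out exactly as the $tell$-guarded case of the original proof, by splicing the offending prefix into $\coder(AB+tell(a))$, which may only succeed. For the surviving $ask(u_j)$- and $get(u_j)$-guarded alternatives, the original Step~2 arguments yield $u_j\notin\{a_1,\dots,a_m\}$ and $u_j\notin\{b_1,\dots,b_n\}$, via the deadlocking contexts $tell(a);AB$ and $tell(b);AB$ and the success-forcing probes $tell(a);(AB+g)$ and $tell(b);(AB+g)$, with $g$ taken as $ask(a)/ask(b)$ for items~2--3 and as $get(a)/get(b)$ for item~1; the one new ingredient is the bookkeeping for the occurrence that a $get$ removes.

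For items~1 and~2 this closes the argument: by Step~1 some computation of $\coder(tell(a);tell(b))$ reaches $\{a_1,\dots,a_m,b_1,\dots,b_n\}$, on which — by Step~2 — no guard of $\coder(AB)$ is enabled, so $\coder(AB)$ deadlocks there, whence $\coder(tell(a);tell(b);AB)$ admits a failing computation, contradicting through $P_3$ the unique successful computation of $tell(a);tell(b);AB$.

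The hard part will be item~3. Once $nask$ is present, the replay property of Step~1 fails: a computation of $\coder(tell(b))$ from $\emptyset$ may perform $nask(x)$ with $x$ absent there but present in $\{a_1,\dots,a_m\}$, so the store reached by $\coder(tell(a);tell(b))$ can no longer be identified with $\{a_1,\dots,a_m,b_1,\dots,b_n\}$; it might even contain some guard term $u_j$, in which case $\coder(AB)$ does not deadlock and the contradiction of Step~3 evaporates. This is the one genuinely delicate point, and I would handle it following~\cite{BJ03b}: rather than splicing independently chosen computations, fix a single computation of $\coder(tell(a);tell(b))$ from $\emptyset$, re-derive the exclusions of Step~2 relative to the stores that this computation actually visits, and keep track of which terms the $tell$-coders test for absence, so as to still force $\coder(AB)$ to deadlock on the final store. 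The $ask,nask,get,tell$ versions of the Step~2 probes, the handling of graphical primitives, and the conventions $E;A\equiv A$ and $E\parac A\equiv A\parac E\equiv A$ are all used exactly as in the $ask,tell$ proof.
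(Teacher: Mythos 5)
Your proposal is correct and follows essentially the same route as the paper, which gives no proof at all beyond the remark that ``similar arguments as in [BJ03b]'' extend the preceding proposition; your adaptations for items 1 and 2 (replacing $AB$ by $[\,get(a)\to get(b)\,]$ when $ask$ is unavailable, ruling out $nask$- and $tell$-guarded alternatives as those that fire on the empty store, and tracking the occurrences removed by $get$) are sound and more detailed than what the paper offers. The one point you flag as delicate --- the failure of the replay/monotonicity property once $nask$ is present in item 3 --- is real, and deferring its resolution to the technique of \cite{BJ03b} is precisely what the authors themselves do.
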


% Applications
\section{Performance}
\label{performance}

Let us now illustrate the gain of efficiency during model-checking obtained by the guarded
list construct. To that end, we shall subsequently compare the
performance of the \Scan\ and \Anemone\ breath-first search model checker on various
examples of the rush hour puzzle coded, on the one hand, without
the guarded list construct, and, on the other hand, with the guarded list
construct.

As described in the previous sections, the rush hour puzzle can be
formulated as a coordination problem by considering cars and trucks as
autonomous agents which have to coordinate on the basis of free
places. The complete code is available 
at \cite{Rush-Hour-Program}. Besides sets, maps and widget definitions, it is basically
composed of generic procedures for coding horizontal cars and trucks
as well as vertical cars and trucks.  Specific cars and trucks are
then obtained by instantiating colors and places and are put in
parallel.

The code for the cars and trucks follows the pattern of the code
presented in page~\pageref{page-listing-vertical-truck}. Basically, under
some conditions, each car and truck amounts to (i) obtaining a free
place to move through the execution of a \texttt{get} primitive, (ii)
then to operating the movement graphically through the execution of a
\texttt{move} primitive and (iii) finally to freeing the place
previously occupied by means of the execution of a \texttt{tell} primitive.
As an example, the following code is a snippet refining the code of
page~\pageref{page-listing-vertical-truck}.

\begin{lstlisting}
get(free(pred(r),c));
move(truck_img(c),pred(r),c);
tell(free(succ(succ(r)),c))
\end{lstlisting}

The problem is solved when the $out$ si-term is put on the store,
which leads to checking that the property $\#out=1$ can be reached. 
As easily checked, the hypotheses of Proposition~\ref{prop-preserving-contraction} 
are verified so that we can replace the above code snippet by the following:

\begin{lstlisting}
[ get(free(pred(r),c)) ->
      move(truck_img(c),pred(r),c),
      tell(free(succ(succ(r)),c)) ]
\end{lstlisting}

\noindent
This code is indeed an $F$-preserving contraction for the formulae $F =
(\#out=1)$.

\begin{table}[t]
\begin{center}
\begin{tabular}{|c|c|p{11cm}|}
\hline
Case & Nb's cars/trucks & Game  \\
\hline
1  & 2 & VPurpleTruck(2,4), HRedCar(3,2) \\
\hline
2  & 3 & VPurpleTruck(2,1), HRedCar(3,2), HGreenCar(1,1) \\
\hline
3  & 4 & VPurpleTruck(2,1), HRedCar(3,2), HGreenCar(1,1), VOrangeCar(5,1) \\
\hline
4  & 5 & VPurpleTruck(2,1),  HRedCar(3,2), HGreenCar(1,1), VOrangeCar(5,1),  VBlueTruck(2,4) \\
\hline
5  & 6 & VPurpleTruck(2,1),  HRedCar(3,2), HGreenCar(1,1), VOrangeCar(5,1),  VBlueTruck(2,4),  HGreenTruck(6,3) \\
\hline
6  & 7 &  VPurpleTruck(2,1),  HRedCar(3,2), HGreenCar(1,1), VOrangeCar(5,1),  VBlueTruck(2,4),  HGreenTruck(6,3), VYellowTruck(1,6)  \\

\hline
\end{tabular}
\end{center}
\caption{Test cases \label{test-cases}}
\end{table}

\begin{table}[t]
\begin{center}
\begin{tabular}{|l|c|c|c|c|}
\hline
Case & Without GL & With GL & Gain  & Expected gain\\
\hline 
1  &   2630 ms (2s)  &   298 ms (0s) &  8.82 & 4  \\ 
\hline 
2 &  64341 ms (64s |1m) & 355 ms (0s)&  181 & 8 \\ 
\hline 
3  & 60339 ms (60s |1m)   &770 ms (1s) & 78 & 16 \\ 
\hline 
4  &   495578 ms (496s | 8m)  & 1032 ms (1s) & 480 & 32 \\ 
\hline 
5  & 3271343 ms (3271s | 55m)  & 4100 ms (4s)& 797 & 64 \\ 
\hline 
6  & $\geq$ 10h  &4862322 (1h35m)  &  $\geq$ 6 & 128 \\ 
\hline
\end{tabular}
\end{center}
\caption{Performance results
\label{guarded-list-performance}}
\end{table}

By performing this transformation, one gains per vehicle the
computation of two stores on four, which induces the hope of a gain of
performance of $2^n$ if $n$ is the number of vehicles in parallel. To
verify the actual gain of performance, we have model checked the two
codes (one with guarded list and the other without guarded list) on
the examples of Table~\ref{test-cases}. They are inspired by cards of
the real game and, in view of the above hope, are taken by
progressively adding vehicles. The last column in
Table~\ref{test-cases} gives a brief description of the considered
game. The V and H prefixes refer to a vehicle put vertically or
horizontally, while the coordinates are those of the rows (counted
from top to bottom) and columns (counted from left to right).

Table~\ref{guarded-list-performance} reports on the data obtained on a
portable computer Lenovo x64 bits, running Windows 10 with 16 GB of
memory. The first column refers to the test case, the second and the
third columns give the time in milliseconds necessary for model
checking, the fourth column the time ratio and the last column the
hoped gain according to $2^n$ where $n$ is the number of vehicles in
the game. As can be seen from this table, guarded lists lead to a real
performance gain and even a greater performance than
expected\footnote{In the last case, we stopped the model-checker after 10 hours of run}.
This can be explained by the fact that the \Scan\ and \Anemone\ model
checker relies on non-optimized structures like sequential lists and
basically evaluates dynamically the transition system during the
model-checking phase. It is also interesting to observe that the
exponential behavior resulting from the interleaving of behaviors is
kept to a reasonable cost for the first five cases with guarded lists,
while it starts exploding from the fourth case without guarded lists.
The interested reader may redo the campaign of tests by using the
material available at \cite{Rush-Hour-Program}.

% Comparison with related work
\section{Related work}
\label{relatedWork}

Although, to the best of our knowledge, it has not been exploited by 
coordination languages, the idea of forcing statements to be executed
without interruption is not new. In \cite{Dijkstra-GC} Dijkstra has
introduced guarded commands, which are statements of the form of
$G \rightarrow S$ that atomically executes statement $S$ provided the
condition $G$ is evaluated to true. They are mostly combined in
repetitive constructs of the form
\[ \begin{array}{ll}
  \textbf{do} & G_0 \rightarrow S_0 \\
  \Box & G_1 \rightarrow S_1 \\
       & \cdots \\
  \Box & G_n \rightarrow S_n \\
  \textbf{od}
\end{array} \]
which repetitively selects one of the executable guarded commands
until none of them are executable. A non-deterministic choice is
operated in the selection of the guarded commands in case several of
them can be executed. Later Abrial has used guarded commands in
the Event-B method \cite{Event-B-book}. Such a construct is also at
the core of the guarded Horn clause framework proposed by Ueda in
\cite{Ueda-GHC} to introduce parallelism in logic programming. There
Horn clauses are rewritten in the following form
\[ H \leftarrow G_1, \cdots, G_m | B_1, \cdots, B_n
\]
with $H$, $G_1$, \ldots, $G_m$, $B_1$, \ldots, $B_n$ being atoms.  The
classical SLD-resolution used to reduce an atom is modified as
follows. Assume $A$ is the atom to be reduced. All the clauses whose
head $H$ is unifiable with $A$ have their guard $G_1, \cdots, G_m$
evaluated. The first one which succeeds determines the clause that is
used, the other being simply discarded. To avoid mismatching
instantiations of variables, the evaluation of any $G_i$ is suspended
if it can only succeed by binding variables.  Finally, several pieces
of work have tried to incorporate transactions and atomic constructs
in ``classical'' process algebras, like CCS. For instance,
A2CCS~\cite{Gorrieri-A2CCS} proposes to refine complex actions into
sequences of elementary ones by modelling atomic behaviors at two
levels, with so-called high-level actions being decomposed into atomic
sequences of low-level actions. To enforce isolation, atomic sequences
are required to go into a special invisible state during all their
execution. In fact, sequences of elementary actions are executed
sequentially, without interleaving with other actions, as though in a
critical section.  RCCS~\cite{Danos-RCCS} is another process algebra
incorporating distributed backtracking to handle transactions inside
CCS. The main idea is that, in RCCS, each process has access to a log
of its synchronization history and may always wind back to a previous
state. A similar idea of log is used in
AtCCS~\cite{Acciai-AtCCS}. There, during the evaluation of an atomic
block, actions are recorded in a private log and have no effects
outside the scope of the transaction until it is committed. An
explicit termination action ``end'' is used to signal that a
transaction is finished and should be committed. States are used in
addition to model the evaluation of expressions and can be viewed as
tuples put or retrieved from shared spaces in coordination
languages. When a transaction has reached commitment and if the local
state meets the global one, then all actions present in the log are
performed at the same time and the transaction is closed. Otherwise
the transaction is aborted.

Our guarded list construct share similarities with these pieces of
work. A major difference is however that we restrict the guard to a
single primitive to be evaluated. This eases the implementation since,
once the primitive has been successfully evaluated, the remaining
primitives can be executed in a row without using distributed
backtracking as in RCCS, private spaces as in AtCCS for speculative
computations and checks for compatibility between local and global
environments. Intricate suspensions inherent in guarded Horn clauses
are also avoided. Nevertheless, under this restriction, the
combination with the non deterministic choice operator $+$ allows to
achieve computations similar to the repetitive statements of guarded
commands. With respect to these pieces of work, our contribution is
also to focus on model checking and to propose a refinement strategy
that allows to transform programs by introducing the guarded list
construct. An expressiveness study is also proposed in this paper and
not in these pieces of work.

Limiting the state explosion problem in model checking by limiting
interleaving is similar in spirit with the partial-order reduction
introduced in \cite{Godefroid-91,McMillan-92,Peled-93,Valmari-96}.
Realizing that $n$ independent parallel transitions result in $n!$
different orderings and $2^n$ different states, the idea is to select
a representative composed of $n+1$ states. Indeed, as the transitions
are independent, properties need only to be verified on a possible
ordering. This technique has been employed in many research efforts
for model checking asynchronous systems. However, these efforts aim at
designing more efficient algorithms on optimized automata. The
approach taken here is different. We do not change our algorithm for
model checking, but rather introduce a new construct as well as
considerations on refinements to transform programs into more
efficient programs.

% Conclusion
\section{Conclusion}
\label{conclusion}

In the aim of improving the performance of the model checking tool
introduced in the workbenches \Scan~\cite{Scan} and
\Anemone\cite{Anemone}, thĩs article has introduced a new construct,
named guarded list. It has been proved to yield an increase of
expressiveness to Linda-like languages, while indeed bringing an
increase of efficiency during the model checking phase. In order to
pave the way to transform programs by safely introducing the guarded
list construct, we have also proposed a notion of refinement and have
characterized situations in which one can safely replace a sequence of
primitives by a guarded list of primitives.

Our work opens several paths for future research. As regards the
expressiveness study, we have used the approach proposed in
\cite{BJ98} for a few sublanguages. This naturally leads to deepen the
study to include all the sublanguages and to compare them with the
$L_{MR}$ and $L_{CS}$ families of languages studied in
\cite{BJ98}. Moreover this approach is only one of the possible
approaches to compare languages. It would be for instance interesting
to verify whether the absolute approach promoted by Zavattaro et al in
\cite{Zavattaro-expressiveness} would change the expressiveness
hierarchy of languages. Moreover, expressiveness studies based on
bisimulations and fully abstract semantics such as reported in
\cite{Peters} are also worth exploring. As regards model-checking, the
algorithm embodied in the \Scan\ and \Anemone\ workbenches is quite elementary and
calls for improvements. In that line of research, it would be
interesting to study how state collapsing and pruning techniques used
for checking large distributed systems may improve the performance of
the model checker. Finally, future work will aim at developing further the theory of refinement
and in investigating correctness preserving transformation
techniques.

% Acknowledgment
\section{Acknowledgment}

The authors thank the University of Namur for its support. They also
thank the Walloon Region for partial support through the Ariac project
(convention 210235) and the CyberExcellence project (convention
2110186). Moreover they are grateful to the anonymous reviewers for
their comments on earlier versions of this work.

% -----------------------------------------------------------%

% \nocite{*}
\bibliographystyle{eptcs}
\bibliography{generic}

\end{document}